\newtheorem{theorem}{Theorem}[section]
\newtheorem{lemma}[theorem]{Lemma}
\theoremstyle{definition}
\newtheorem{example}[theorem]{Example}
\newtheorem{remark}[theorem]{Remark}
\newcommand{\score}{\mathrm{score}}
\newcommand{\ttt}{t}
\newcommand{\ttau}{\tau}
\title{List decoding of repeated codes\thanks{This research was partially supported by the National Science Foundation under Grant No. CCF-0916492, by the Danish National Research Foundation and the National Science Foundation of China (Grant No.11061130539) for the Danish-Chinese Center for Applications of Algebraic Geometry in Coding Theory and Cryptography, by the Spanish grant No. MTM2007-64704, and by the Spanish MINECO under grant No. MTM2012-36917-C03-03.}}
\author{F. Hernando\thanks{Department of Mathematics, Universidad Jaume I, Spain. {\ttfamily carrillf@mat.uji.es}} \and M. O'Sullivan \thanks{Department of Mathematics and Statistics, San Diego State University, USA. {\ttfamily mosulliv@math.sdsu.edu}} \and D. Ruano \thanks{Department of Mathematical Sciences, Aalborg University, Denmark. {\ttfamily diego@math.aau.dk}}}
\date{}
\begin{document}
\maketitle

\begin{abstract}
Assuming that we have a soft-decision list decoding algorithm of a linear code, a new hard-decision list decoding algorithm  of its repeated code is proposed in this article. Although repeated codes are not used for encoding data, due to their parameters, we show that they have a good performance with this algorithm.  We compare, by computer simulations, our algorithm  for the repeated code of a Reed-Solomon code against a decoding algorithm of a Reed-Solomon code. Finally, we estimate the decoding capability of the algorithm for Reed-Solomon codes and show that performance is somewhat better than our estimates.
\end{abstract}
 
\section{Introduction}

List decoding  was introduced by Elias \cite{eli} and Wozencraft \cite{woz}. A list decoder can produce several candidate codewords near the received vector, thus relaxing the requirements of unique decoding and enabling the possibility of decoding beyond half of the minimum distance. The efficient list decoding problem was unsolved for many years until Sudan \cite{Sudan} provided an algorithm for low rate Reed-Solomon codes. Later Guruswami and Sudan \cite{Guruswami-Sudan} gave a general answer for Reed-Solomon codes that has 2 steps: the interpolation step and the root finding step. 

A soft-decoding algorithm works under the assumption that the output of the channel is probabilistic information for the reliability of input data in contrast to a hard-decision list-decoding algorithm which assumes that a word of the ambient space is received.  Koetter and Vardy created a soft-decision list decoding algorithm for RS codes based on the interpolation techniques of Guruswami-Sudan \cite{koetter}. This algorithm has 3 steps.  First, probabilistic information from the channel is translated into an assignment of multiplicities to points in the plane---the points representing received location-value pairs. The other two steps are the  interpolation and root finding steps. 

In this paper we consider the repeated code of a linear code $C \subset \mathbb{F}^n_q$,  that is  $C^\ell = \{ (c, \ldots , c) : c \in C \}\subset  \mathbb{F}_q^{\ell n}$, for some $\ell \ge 2$.  We present a (hard-decision) list decoding algorithm for repeated codes, based on a soft-decision list decoding algorithm for the constituent code, $C$. In our algorithm, the multiplicity step is based on the algebraic structure of the code, rather than the information from the channel. Namely, it is an interpolation problem for a single block taking into account the information provided by the whole codeword. As far as the authors know, this is a novel idea in interpolation decoding. We consider two multiplicity assignment methods in detail, each of them gives rise to a different decoding algorithm. The first choice maximizes the score and, therefore, the error correction capability. The second one minimizes the sum of the multiplicities, thus it minimizes the computational time complexity. 
 
Repeated codes do not have good parameters, but we remark that the minimum distance of a linear code is only an estimate of  unique-decoding capability, and it is an even rougher estimate of the list-decoding performance of the code.  We show that repeated codes of Reed-Solomon codes using the algorithm in \cite{Lee-Michael1} and taking into account our new set up for the multiplicities, may have similar decoding capability to a Reed-Solomon code, with the same information ratio and length, using the algorithm in \cite{Lee-Michael}. However, significant differences in the computation time are observed since the complexity of the first simulation depends on $n$ instead of  $\ell n$, the length of the code. We shall compare our hard-decision list-decoding algorithm to another hard-decision list-decoding algorithm, since it does not make sense to compare it to a soft-decoding algorithm.

Finally, we estimate the decoding capability of the algorithm for Reed-Solomon codes. Even though the bounds assume certain properties of the error vector, they are relatively close to experimental values in the examples.

The paper is organized as follows: In Section~\ref{sec:interpolating} we recall the soft-decision interpolation problem for RS codes.  In Section~\ref{sec:interp-repe} we introduce our list decoding algorithm for repeated codes.  In Section~\ref{sec:sim}, we present some simulations of the decoding algorithm with MAGMA \cite{ma} and interpret the results. We estimate the number of errors $\ttt$ that we can decode in Sections~\ref{sec:boundsfortau2} and~\ref{sec:boundsfortau}. Section~\ref{sec:con} concludes the article.

\section{Soft-Decoding}\label{sec:interpolating}

Koetter and Vardy discovered a soft-decision list decoding algorithm for RS codes based on the interpolation techniques of Guruswami-Sudan \cite{koetter},  and   later the interpolation step was described using Gr{\"o}bner bases in \cite{Alekhnovich,Lee-Michael}. For our simulations, we use  the algorithms in \cite{Lee-Michael1,Lee-Michael} , which use the same approach, the only difference being  the multiplicity assignment. We recall in this section the soft-decoding algorithm in  \cite{Lee-Michael1}  for Reed-Solomon codes.

Let $\alpha_1, \ldots, \alpha_n$ be $n$ different points of the finite field  $\mathbb{F}_q$ with $q$ elements and let $C$ be the Reed-Solomon code with parameters $[n,k,d]$ defined as $$C=\{(h(\alpha_1),h(\alpha_2),\ldots,h(\alpha_{n})): \deg (h)\leq k-1\}.$$ 
For soft decision decoding, Koetter and Vardy \cite{koetter} use reliability information provided by the channel to assign
multiplicities  $m_{i,\beta}$ to each point $p_{i,\beta}=(\alpha_{i},\beta)$ for $i=1,\ldots,n $ and
$\beta\in\mathbb{F}_q$.
Let $M$ be the collection of these multiplicities, 
\[M=\{(p_{i,\beta},m_{i,\beta}):i=1,\ldots,n;  \beta\in\mathbb{F}_q \}\]
Consider  the ideal in $\mathbb{F}_q[x,y]$ of polynomials interpolating at the points of $M$ with the desired multiplicities:
\begin{equation}\label{def: I_M}
I_M=\{f\in \mathbb{F}_q[x,y]: \mathrm{mult}_p (f)\geq m \ for \ (p,m)\in M\},
\end{equation} {where $\mathrm{mult}_p (f)$ denotes the multiplicity of $f$ at $p$.}
For  $r=(r_1,\ldots,r_n)\in\mathbb{F}_q^n$, let $h_r$ be the interpolating polynomial at the points
$(\alpha_{i},r_i)$ for $i=1,\ldots,n$.
The key observation of Guruswami and Sudan is that for $f  \in
I_M$ and a codeword $c$, $y-h_c$ is a factor of $f(x,y)$ when 
\begin{align}
\label{e:keycond}
\sum_{i=1}^n m_{i, c_i} > \deg (f(x, h_c))
\end{align}

For a given  $M$ we therefore define for each $r=(r_1,\ldots,r_n)\in\mathbb{F}_q^n$ 
$$
\score(r)=\sum_{i=1}^n m_{i,r_i}.
$$
We use the $(1,k-1)$-weighted degree of polynomial $f = \sum f_{i,j}x^iy^j$, which is defined
to be $\deg_{1,k-1} (f) = \max\{i+j(k-1): f_{i,j} \ne 0\}$.  
Extend the weighted degree to a monomial ordering by taking
$x^ay^b>_{k-1}x^iy^j$ if either
$\deg_{1,k-1}(x^ay^b)>\deg_{1,k-1}(x^iy^j)$ or if
$\deg_{k-1}(x^ay^b)=\deg_{k-1}(x^iy^j)$ and $b>j$.
We may now interpret \eqref{e:keycond} as saying that  a polynomial
$f(x,y)$ such that the score of $c$ is larger than the
$(1,k-1)$-degree of $f(x,y)$ is divisible by  $y-h_c$.
Factoring $f$ would  produce the high scoring codeword $c$.

The requirement that $f(x,y)$ pass through $(\alpha_i,\beta)$ with multiplicity $m_{i,\beta}$ imposes  $\binom{m_{i,\beta}+1}{2}$ conditions, so overall we have 
\begin{equation}\label{eq:nequat}
N= \sum_{i=1}^{n}\sum_{\beta\in\mathbb{F}_q}\binom{m_{i,\beta}+1}{2},
\end{equation}
conditions.  
There is an upper bound for the $(1,k-1)$-weighted degree $d$ of a polynomial $Q(x,y)$ given that $N$ conditions are imposed 
(see e.g. \cite[Proposition 3]{Lee-Michael1}).

Hence, we look for an interpolating polynomial of the
form $\sum_{(i,j)\in S}f_{i,j}x^iy^j$, where $S\subset\{(i,j):i,j\geq
0\}$ is the subset of indices with $(1,k-1)$-degree less than $d$,
and by choice of $d$, $|S| > N$.
The minimal polynomial with respect to the $>_{k-1}$ is usually called $Q(x,y)$.
One  method for obtaining $Q(x,y)$, for example  \cite{Lee-Michael1}, is to  compute a Gr\"obner basis of $I_M$ with respect to the $(1,k-1)$-weighted degree and pick the smallest element in it.

\section{List decoding of repeated codes}\label{sec:interp-repe}

Let $C \subset \mathbb{F}_q^n$ be a Reed-Solomon code with parameters $[n,k,d]$ and generator matrix $G$. We will describe the algorithm for a Reed-Solomon code, however, the algorithm can be extended in a straightforward manner to any linear code provided with soft-decision list-decoding algorithm.

We consider the repeated code of $C$, $$C^\ell = \{ (c, \ldots , c ) : c \in C \},$$which has parameters $[\ell n,k,\ell d]$ and generator matrix $(G|\cdots|G)$ \cite[Problem 17 of Ch. 1]{mac}. We will describe a hard-decision list decoding algorithm for $C^\ell$ by using a soft-decoding algorithm for $C$, thus we only have to define the matrix of multiplicities from a received word.

Let $c=(c_1,\ldots,c_n) \in C$ then a typical codeword of $C^\ell$ is of the form  
$$
\mathbf{c}=(c_1,\ldots,c_n,\ldots,c_1,\ldots,c_n) \in \mathbb{F}_q^{\ell n}.
$$
One can also understand a vector of length $\ell n $ as an $\ell \times n $ matrix. Hence, if $\mathbf{v}\in \mathcal{M}(\ell\times n ,\mathbb{F}_q)$ we  denote by  $v_i^j$ the entry corresponding to the  $j$-th row (block) and the  $i$-th column (position),  for $i=1,\dots,n$ and $j=1,\dots , \ell$. According to this notation a word in $\mathbb{F}_q^{\ell n}$ may be represented as $$\mathbf{v}=(v_1^1,\ldots,v_n^1,\ldots,v_1^{\ell},\ldots,v_n^{\ell}).$$

Let $\mathbf{c}$ be the sent word and $\mathbf{r}=\mathbf{c}+\mathbf{e}$ the received word 
with error weight $\ttt=wt(\mathbf{e})$. We have that $\mathbf{r}=(r_1^1,\ldots,r_n^1,\ldots,r_1^{\ell},\ldots,r_n^{\ell})$ and  $\mathbf{c}=(c_1^1,\ldots,c_n^1,\ldots,c_1^{\ell},\ldots,c_n^{\ell})$. Since $\mathbf{c} \in C^\ell$, one has that
$$
c_i^j=c_i^k, \mathrm{~for~every~}  i\in\{1,\ldots,n\}   \mathrm{~and~}   j,k\in\{1,\ldots,\ell\}. 
$$ 
Therefore, $r_i^j=r_i^k$ if and only if  $e_i^j=e_i^k$. Hence if the received word has the same value in several positions it is likely that the error value in these positions is zero. Namely, the more positions where the values agree the more likely that these positions are error free. Moreover, the bigger the base field $\mathbb{F}_q$ the more likely the previous assumption is right. Based on this fact we will define the multiplicities $m_{i,\beta}$  at the point $p_{i,\beta}=(\alpha_i,\beta)$ for the soft-decoding algorithm of $C$.

For a received word $\mathbf{r}$, we define two different assignment of multiplicities. For $i=1,\ldots,n$ and  $\beta\in\mathbb{F}_q$: 

\begin{itemize}

\item[(1)] $m_{i,\beta}=|\{j\in\{1,\ldots,\ell\} : r^j_i =\beta \}|$, or

\item[(2)] $m_{i,\beta}=1$ if $|\{j\in\{1,\ldots,\ell\} : r^j_i =\beta \}| \ge b$, where $b \in \{1, \ldots, \ell \}$. We will consider $b = \lfloor \ell /2 \rfloor +1$ and $b =\lfloor \ell /2 \rfloor $ for our simulations in Section~\ref{sec:sim}.
\end{itemize}

\begin{example}
Let $\mathbf{r}=(0,0,0, 0,0,0, 0,0,1, 0,1,1, 0,2,2) \in \mathbb{F}_3^{\ell n}$ a received word with $\ell = 5$ and $n=3$, i.e. using matrix notation$$ \mathbf{r}= \left( \begin{array}{ccc}
0 & 0 & 0 \\ 
0 & 0 & 0 \\ 
0 & 0 & 1 \\ 
0 & 1 & 1 \\ 
0 & 2 & 2
\end{array} \right ).$$ 

We consider the multiplicities assignment for this word, the non-zero multiplicities are:

$m_{1,0} = 5, m_{2,0} = 3, m_{2,1}=1, m_{2,2}=1, m_{3,0}=2, m_{3,1}=2, m_{3,2}=1$ for multiplicity assignment (1).

$m_{1,0}=1, m_{2,0}=1$ for multiplicity assignment (2) with $b=3$.

$m_{1,0}=1, m_{2,0}=1, m_{3,0}=1, m_{3,1}=1$ for multiplicity assignment (2) with $b=2$.

\end{example}

For decoding, we will consider the soft-decision algorithm for $C$ (for instance \cite{Lee-Michael1}) with these multiplicities.  Trivially, we have that the computational complexity of this algorithm for $C^\ell$ equals the computational complexity of the soft-decoding algorithm for $C$ with multiplicity assignments corresponding to the choice (1) or (2).

For instance, for the first multiplicity assignment it is $O(R^{\frac{1}{2}}n^2 m^5 )$ using \cite{Lee-Michael1}, where $R=k/n$ is the rate of a block and $m=max\{m_{j,\beta} :  j=1,\ldots,n\mathrm{~and~}\beta\in\mathbb{F}_q\}$. Thus, for the second multiplicity assignment, the complexity is $O(R^{\frac{1}{2}}n^2)$ using \cite{Lee-Michael1}. 
Notice that for decoding a repetition code with parameters $[n\ell,k,\ell d]$, we perform a list decoding of a Reed-Somon code of length $n$, whilst a list decoding of a $[n\ell,k]$ Reed-Solomon code has complexity $O((k/\ell n)^{\frac{1}{2}}(\ell n)^2 m^5 )$ using \cite{Lee-Michael}. That is, our method is about $\ell^{3/2}$ times faster than decoding a Reed-Solomon code with the same parameters.  One may consider  other heuristics for  multiplicity assignment between  (1) and (2), that is,  $0\leq m_{i,\beta}\leq |\{j\in\{1,\ldots,\ell\} : r^j_i =\beta \}|$. We have considered these two  multiplicity assignments since they represent extremes between  (1) maximizing  the score and  (2)  minimizing the time complexity.

\subsection{Multiplicity assignment (1)}

For a word in $C^\ell$, we consider its score using the first multiplicity assignment. 

\begin{lemma}\label{le:scorebound}
Let $\mathbf{c} = (c , \ldots , c)$ be a sent codeword and let $\mathbf{r} = \mathbf{c} + \mathbf{e}$ be a received word  with $wt(\mathbf{e}) = \ttt$. Then $$\score (\mathbf{c}) = \sum_{i=1}^n m_{i,c_i}  = \ell n - \ttt.$$
\end{lemma}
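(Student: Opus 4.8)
The plan is to unwind the definition of multiplicity assignment~(1) and reduce the claim to the single observation that, in the $i$-th position, the received symbol equals $c_i$ exactly in those blocks where the error is zero. Everything else is bookkeeping.

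First I would recall that a codeword $\mathbf{c} = (c,\ldots,c) \in C^\ell$ satisfies $c_i^j = c_i$ for every $i \in \{1,\ldots,n\}$ and every $j \in \{1,\ldots,\ell\}$, as already noted before the statement. Since $\mathbf{r} = \mathbf{c} + \mathbf{e}$, we have $r_i^j = c_i + e_i^j$, so $r_i^j = c_i$ if and only if $e_i^j = 0$. Plugging $\beta = c_i$ into the definition of multiplicity assignment~(1) then gives, for each fixed $i$,
$$
m_{i,c_i} = |\{ j \in \{1,\ldots,\ell\} : r_i^j = c_i \}| = |\{ j \in \{1,\ldots,\ell\} : e_i^j = 0 \}| = \ell - |\{ j : e_i^j \neq 0 \}|.
$$

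Next I would sum this identity over $i = 1,\ldots,n$. By definition $\score(\mathbf{c}) = \sum_{i=1}^n m_{i,c_i}$, and the double sum $\sum_{i=1}^n |\{ j : e_i^j \neq 0 \}|$ counts precisely the nonzero entries of $\mathbf{e}$ regarded as an $\ell \times n$ matrix, i.e.\ $wt(\mathbf{e}) = \tau$. Hence $\score(\mathbf{c}) = \ell n - \tau$.

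I do not expect any genuine obstacle: the argument is a direct count, and the only step deserving a moment's care is keeping straight the two views of a length-$\ell n$ vector (as a flat tuple and as an $\ell \times n$ matrix), so that $\sum_i |\{ j : e_i^j \neq 0 \}|$ is correctly identified with the Hamming weight of $\mathbf{e}$. As an optional sanity check one may also note that summing $m_{i,\beta}$ over all $\beta \in \mathbb{F}_q$ yields $\ell$ for each $i$, though this is not needed for the statement.
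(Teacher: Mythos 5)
Your proof is correct and matches the paper's argument: the paper's one-line proof is exactly the observation $m_{i,c_i} = \ell - wt(e_i^1,\ldots,e_i^\ell)$, which you derive and then sum over $i$. You merely spell out the bookkeeping that the paper leaves implicit.
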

\begin{proof} Note that for $i=1, \ldots, n$,  one has that $m_{i,c_i} = \ell - wt (e^1_i, \ldots , e^l_i)$. Summing over the $n$ positions gives the result. 
\end{proof}

As the  lemma shows,  the score of the received word is a simple
function of the number of errors $\ttt$.
In Section~\ref{sec:boundsfortau} we investigate the interplay between
the number of conditions imposed, equation \eqref{eq:nequat},  and the score,
we derive bounds for successful decoding of  $\ttt$ errors, and we 
compare the bounds with the simulation results presented in Section~\ref{sec:sim}.

\subsection{Multiplicity assignment (2)}

Using the second multiplicity assignment, the score of a received word might be lower than the one obtained using the first multiplicity assignment. Therefore, the error correction capability is smaller. However, the computational complexity is lower since $\max\{m_{j,\beta} :  j=1,\ldots,n\mathrm{~and~}\beta\in\mathbb{F}_q\}$ is equal to $1$. Note that if $m_{i,\beta}=0$ for every $\beta \in \mathbb{F}_q$ then we are considering an erasure at position $i \in \{ 1 \ldots, n \}$. 

We consider $b=\lfloor \ell /2 \rfloor +1$ because if $b \ge  \lfloor \ell /2 \rfloor +1$, then,  for $i \in \{1, \ldots , n \}$,  $m_{i,\beta}=0$ for all $\beta \in \mathbb{F}_q$  but for, at most, one. Hence, we will have  either an erasure (we interpolate with multiplicity zero at that position)  or we interpolate with multiplicity one. With this choice for $b$ the algorithm is very fast.

\begin{remark}
Let us compare our algorithm with some approaches in the bibliography. A decoding algorithm for a repeated code $C^\ell$ can be obtained decoding every block $r_j$ of the received word $r$ for $j=1,\ldots,\ell$ until the decoded block $c'_j$ verifies that $(c'_j , \ldots , c'_j)$ is at distance $\lfloor (\ell d-1)/2 \rfloor$ of the received word. This is the approach of \cite{Lally} for quasi-cyclic codes, that is when $C$ is cyclic. One could also consider $C^\ell$ as a convolutional code \cite{Dumer}. However, all these approaches will have bad performance due to the poor parameters of the repeated codes.
\end{remark}

\section{Computer experiments}\label{sec:sim}

We have compared the performance of our algorithm for the repetition
code of a $[n,k]$ Reed-Solomon code over $\mathbb{F}_{p^v}$  to the
list-decoding algorithm  \cite{Lee-Michael} for a $[n\ell,k]$
Reed-Solomon code over $\mathbb{F}_{p^u}$ where $u$ is chosen so that
$p^u> n\ell$.  The information rate of the two codes is the same,
although the $[n\ell,k]$ Reed-Solomon code uses larger symbol size.  
For each code we tested a range of values for $\ttt$ to determine the
point where correction performance declined.
We also compared the time required for decoding.  In the repeated code
case, we tested  the algorithm described in the previous section for both multiplicity assignments.
For the Reed-Solomon code over the
larger field we used multiplicity~1 because for higher multiplicity,
the algorithm is very slow. 
 
 We have implemented in MAGMA the decoding algorithm described in Section~\ref{sec:interp-repe}: given the codeword $c=(0,\ldots,0)$ in a $[n,k,d]$ Reed-Solomon code we consider the repeated codeword $(c,\ldots,c) \in C^\ell$. In particular we choose $n=63$, $\ell=5$ and several values for $k$. We consider $\ttt$ errors in $\ttt$ uniformly distributed random positions among the $\ell n$ positions, so it is possible that more than one error may occur in position $i$. We apply the algorithm in \cite{Lee-Michael1} with the prescribed multiplicities described using the multiplicity assignment (1) and (2) with $b=\lfloor \ell / 2 \rfloor + 1 = 3$ and $b= \lfloor \ell / 2 \rfloor =2$. We  ran the procedure $10000$  times each for different  values of $\ttt$. We declare success if the sent codeword is in the output list. For multiplicity assignment (2) the list has always size one since we are doing linear interpolation, and for multiplicity assignment (1) all the experiments produce size one as well.

\begin{enumerate}
\item Consider as block code the RS code with parameters $[63,14,50]$ over $\mathbb{F}_{2^6}$.  The repeated code with $\ell=5$ has parameters $[315,14,250]$.  Our simulations in Tables~\ref{RS-63-14} and~\ref{RS-63-14-m1} show that we can uniquely decode about $226$ errors using multiplicity assignment (1), $183$ errors using multiplicity assignment (2) with $b=3$ and $218$ errors using multiplicity assignment (2) with $b=2$. However, the algorithm using  multiplicity assignment (2) with $b=3$ is about 12  times (resp 6 times with $b=2$) faster than the algorithm using multiplicity assignment (1).

We compare the previous algorithms with  list decoding  of the  RS code over $\mathbb{F}_{2^9}$ with parameters $[315,14,302]$.   Using multiplicity one we can decode about the same number of errors as with multiplicity assignment (1) but it is 176 times slower than the one with multiplicity assignment (1) and 1077 times slower than the simulation with multiplicity assignment (2) and $b=2$,  see Table~\ref{RS-315-14}.

\begin{table}[htb]\caption{List decoding $[315,14,250]$ repeated code over $\mathbb{F}_{2^6}$  with constituent  $[63,14,50]$ RS code, multiplicity assignment (1)}
\begin{center}
\begin{tabular}{|c|c|c|c|c|c|c|c|}\hline 
$\ttt$ & $224$ & $225$ & $226$& $227$ & $228$ & $229$& $230$\\ \hline
$\frac{\text{Number of success}}{10000}$   & $1$ & $1$& $1$ & $.9999$& $.9996$ & $.9992$ & $.9989$\\ \hline
Time  & $1435.190$  & $1424.100$ & $1428.570$ & $1324.950$ &  $1329.510$ & $1322.360$ & $1322.220$\\ \hline
\end{tabular}
\label{RS-63-14}
\end{center}
\end{table}

\begin{table}[htb]\caption{List decoding $[315,14,250]$ repeated code over $\mathbb{F}_{2^6}$  with constituent $[63,14,50]$ RS code, multiplicity assignment (2)}
\begin{center}
\begin{tabular}{|c|c|c|c|c|c|c|c|}\hline 
$\ttt$     & $182$& $183$   & $184$& $185$& $217$ & $218$& $219$ \\ \hline
$\frac{\text{Number of success}}{10000}$, $b=3$   & $1$ & $1$ & $9999$  & $.9997$ & $-$ & $-$& $-$\\ \hline
Time, $b=3$  & $118.710$  & $116.330$& $115.590$ & $114.55$& $-$& $-$ & $-$\\ \hline

$\frac{\text{Number of success}}{10000}$, $b=2$    & $1$ & $1$ & $1$ & $1$ & $1$& $1$& $.9997$\\ \hline
Time, $b=2$     & $339.940$& $336.990$& $332.980$& $328.040$ & $223.760$& $234.300$& $233.560$ \\ \hline
\end{tabular}
\label{RS-63-14-m1}
\end{center}
\end{table}

\begin{table}[htb]\caption{List decoding $[315,14,302]$ RS code over $\mathbb{F}_{2^9}$, multiplicity 1}
\begin{center}
\begin{tabular}{|c|c|c|c|}\hline 
$\ttt$  &$229$   & $230$   & $231$  \\ \hline
$\frac{\text{Number of success}}{10000}$     & $1$ & $1$ &$.3662$\\ \hline
Time   & $250473$ &$252184$&$258162$  \\ \hline
\end{tabular}
\label{RS-315-14}
\end{center}
\end{table}

\item Consider as block code the RS code with parameters $[63,40,24]$ over $\mathbb{F}_{2^6}$.  The repeated code with $\ell=5$ has parameters $[315,40,120]$.  Our simulations in Tables~\ref{RS-63-40} and~\ref{RS-63-40-m1} show that we can uniquely decode about $153$ errors using multiplicity assignment  (1), $110$  errors using multiplicity assignment  (2) with $b=3$ (resp $150$ errors with $b=2$), but the algorithm using multiplicity assignment (2)  is about $4.4$ times faster with $b=3$ (and $3.2$ times faster with $b=2$) than the algorithm using multiplicity assignment (1).
 
We compare the previous algorithms with  list decoding of the  RS code over $\mathbb{F}_{2^9}$ with parameters $[315,40,276]$.  With multiplicity one we can decode more  errors ($177$, see Table~\ref{RS-315-40}) than with multiplicity assignment (1) but it is $116$ times slower than the one with multiplicity assignment (1) and it is $372$ times slower than the one with multiplicity assignment (2) and $b=2$.

\begin{table}[htb]\caption{List decoding $[315,40,120]$ repeated code over $\mathbb{F}_{2^6}$  with constituent  $[63,40,24 ]$ RS code, multiplicity assignment (1)}
\begin{center}
\begin{tabular}{|c|c|c|c|c|c|c|}\hline 
$\ttt$   & $153$  & $154$ & $155$& $156$& $157$& $158$ \\ \hline
$\frac{\text{Number of success}}{10000}$  & $1$ & $.9999$   & $.9999$  & $.9998$& $.9999$& $.9999$\\ \hline
Time  & $1453.680$ & $1454.820$& $1458.830$ & $1455.610
$& $1454.390$& $1450.540$\\ \hline
\end{tabular}
\label{RS-63-40}
\end{center}
\end{table}

\begin{table}[htb]\caption{List decoding $[315,40,120]$ repeated code over $\mathbb{F}_{2^6}$  with constituent $[63,40,24 ]$ RS code, multiplicity assignment (2)}
\begin{center}
\begin{tabular}{|c|c|c|c|c|c|c|c|c|}\hline 
$\ttt$    & $110$& $111$  & $112$ & $113$& $111$& $149$&  $150$& $151$\\ \hline
$\frac{\text{Number of success}}{10000}, b=3$ & $1$  & $.9999$& $1$ & $1$& $.9997$ & $-$ & $-$ & $-$ \\ \hline
Time, $b=3$    & $327.160$& $322.680$ & $318.270$& $315.290$& $309.640$& $-$& $-$ & $-$\\ \hline

$\frac{\text{Number of success}}{10000}, b=2$     & $1$ & $1$ & $1$ & $1$& $1$& $1$& $1$& $.9999$\\ \hline
Time, $b=2$ & $559.470$   & $557.950$  & $554.180$ & $553.580$& $553.720$& $462.040$& $455.150$& $452.330$\\ \hline
\end{tabular}
\label{RS-63-40-m1}
\end{center}
\end{table}

\begin{table}[htb]\caption{List decoding $[315,40,276] $ RS code over $\mathbb{F}_{2^9}$, multiplicity 1}
\begin{center}
\begin{tabular}{|c|c|c|c|c|}\hline 
$\ttt$     & $175$   & $176$ & $177$ & $178$\\ \hline
$\frac{\text{Number of success}}{10000}$     & $1$ & $1$ & $1$ & $.2912$ \\ \hline
Time   & $161480$   & $169346$  & $169805$ & $151520$\\ \hline

\end{tabular}
\label{RS-315-40}
\end{center}
\end{table}

\item Consider as block code the RS code with parameters $[63,54,10]$ over $\mathbb{F}_{2^6}$.  The repeated code, with $\ell=5$, has parameters $[315,54,50]$.  Our simulations in Tables~\ref{RS-63-54} and~\ref{RS-63-54-m1} show that we can uniquely decode about $94$ errors using multiplicity assignment  (1) and $62$ errors using multiplicity assignment  (2) with $b=3$ (resp $89$ with $b=2$). However, the algorithm using muliplicity assignment (2) with $b=3$ is about $2.5$ times faster ($2.2$ times faster with $b=2$) than the algorithm using multiplicity assignment (1).

We compare the previous algorithms with  list decoding  of the  RS code over $\mathbb{F}_{2^9}$ with parameters $[315,54,262]$.  Using multiplicity one, we can decode more errors, about $156$, than with multiplicity assignment (1) but it is $89$ times slower than the one with multiplicity assignment (1) and it is  $207$ times slower than with multiplicity assignment method (2) and $b=2$, see  Table~\ref{RS-315-54}.

\begin{table}[htb]\caption{List decoding $[315,54,50]$ repeated code over $\mathbb{F}_{2^6}$  with constituent $[63,54,10 ]$ RS code, multiplicity assignment (1)}
\begin{center}
\begin{tabular}{|c|c|c|c|c|c|c|c|}\hline 
$\ttt$      & $93$ & $94$ & $95$& $96$& $97$& $98$& $99$\\ \hline
$\frac{\text{Number of success}}{10000}$ & $1$ & $1$   & $.9999$ & $.9999$ & $1$ & $.9999$& $.9997$\\ \hline
Time     & $1379.150$  & $1372.370$ & $1381.790$ & $1388.470$ & $1398.770$& $1405.480$& $1418.600$\\ \hline
\end{tabular}
\label{RS-63-54}
\end{center}
\end{table}

\begin{table}[htb]\caption{List decoding $[315,54,50]$ repeated code over $\mathbb{F}_{2^6}$  with constituent $[63,54,10 ]$ RS code, multiplicity assignment (2)}
\begin{center}
\begin{tabular}{|c|c|c|c|c|c|c|c|}\hline 
$\ttt$       & $60$   & $61$ & $62$& $64$& $88$ & $89$& $90$\\ \hline
$\frac{\text{Number of success}}{10000}, b=3$    & $1$ & $1$ & $1$& $.9997$&  $-$ &$-$ & $-$\\ \hline
Time, $b=3$   & $539.10$& $540.940$ & $530.780$& $527.040$& $-$ &$-$ & $-$\\ \hline

$\frac{\text{Number of success}}{10000}, b=2$  & $1$ & $1$ & $1$ & $1$ & $1$ & $1$ & $.9999$\\ \hline
Time, $b=2$ & $614.080$ & $618.220$  & $613.190$  & $611.620$ & $607.170$ & $605.460$ & $596.280$\\ \hline
\end{tabular}
\label{RS-63-54-m1}
\end{center}
\end{table}

\begin{table}[htb]\caption{List decoding $[315,54,262] $ RS code over $\mathbb{F}_{2^9}$, multiplicity 1}
\begin{center}
\begin{tabular}{|c|c|c|c|c|}\hline 
$\ttt$   & $150$ & $155$& $156$ & $157$\\ \hline 
$\frac{\text{Number of success}}{10000}$    & $1$ & $1$ & $1$& $.2657$ \\ \hline
Time      & $115934$ &$124500$ & $125797$& $129186$ \\ \hline
\end{tabular}
\label{RS-315-54}
\end{center}
\end{table}

\item  
Finally, we consider an example over a field with characteristic $3$. Consider as block code the RS code with parameters $[26,14,13]$ over $\mathbb{F}_{3^3}$.  The repeated code with $\ell=5$ has parameters $[130,14,65]$.  Our simulations in tables \ref{RS-26-14} and \ref{RS-26-14-m1} show that we can uniquely decode $65$ errors using multiplicity assignment  (1) and $46$ errors using multiplicity assignment  (2) with $b=3$ (resp $53$ with $b=2$). However, the algorithm using multiplicity assignment (2) with $b=3$ is about $4.7$ times faster ($2,7$ times faster with $b=2$) than the algorithm using multiplicity assignment (1).
 
We compare the previous algorithms with  list decoding  of the  RS code over $\mathbb{F}_{3^5}$ with parameters $[130,14,117]$.  Using multiplicity one, we can decode more errors ($77$ errors, see table \ref{RS-130-14}) than with multiplicity assignment (1) but it is $40$ times slower than the one with multiplicity assignment (1) and it is $106$ times slower than with multiplicity assignment  (2) and $b=2$, see  Table \ref{RS-130-14}.

\begin{table}[htb]\caption{List decoding $[130,14,65]$ repeated code over $\mathbb{F}_{3^3}$  with constituent  $[26,14,13 ]$ RS code, multiplicity assignment (1)}
\begin{center}
\begin{tabular}{|c|c|c|c|c|c|}\hline 
$\ttt$     & $62$ & $63$ & $64$ & $65$& $66$\\ \hline
$\frac{\text{Number of success}}{10000}$ & $1$ & $1$   & $1$ & $1$ & $.9998$ \\ \hline
Time       & $254.390$ & $254.590$ & $255.620$ & $254.980$& $256.400$\\ \hline
\end{tabular}
\label{RS-26-14}
\end{center}
\end{table}

\begin{table}[htb]\caption{List decoding $[130,14,65]$ repeated code over $\mathbb{F}_{3^3}$  with constituent $[26,14,13 ]$ RS code, multiplicity assignment (2)}
\begin{center}
\begin{tabular}{|c|c|c|c|c|c|c|c|}\hline 
$\ttt$       & $46$   & $47$ & $48$& $49$& $52$ & $53$& $54$\\ \hline
$\frac{\text{Number of success}}{10000}, b=3$    & $1$ & $.9999$ & $1$& $.9999$&  $-$ &$-$ & $-$\\ \hline
Time, $b=3$   & $55.360$& $49.280$ & $51.850$& $45.150$& $-$ &$-$ & $-$\\ \hline

$\frac{\text{Number of success}}{10000}, b=2$  & $-$ & $-$ & $-$ & $-$ & $1$ & $1$ & $.9999$\\ \hline
Time, $b=2$ & $-$ & $-$  & $-$  & $-$ & $97.230$ & $96.940$ & $96.540$\\ \hline
\end{tabular}
\label{RS-26-14-m1}
\end{center}
\end{table}

\begin{table}[htb]\caption{List decoding $[130,14,117] $ RS code over $\mathbb{F}_{3^5}$, multiplicity 1}
\begin{center}
\begin{tabular}{|c|c|c|c|c|}\hline 
$\ttt$   & $75$ & $76$& $77$ & $78$\\ \hline 
$\frac{\text{Number of success}}{10000}$    & $1$ & $1$ & $1$& $0.2769$ \\ \hline
Time      & $10055$ &$10244$ & $10320$& $10193$ \\ \hline
\end{tabular}
\label{RS-130-14}
\end{center}
\end{table}
\end{enumerate}
 
We consider a list decoding algorithm with multiplicity~1 for the RS codes due to the fact that  the times obtained in tables \ref{RS-315-14}, \ref{RS-315-40}, \ref{RS-315-54}, \ref{RS-130-14} show that higher multiplicity would be impracticable.
 These experiments clearly show advantages to using a repeated code and our method for decoding as compared to using a low rate  Reed-Solomon code.  The decoding complexity is much lower  and the correction capability is quite similar ---even better in the very low rate example---  for the repeated code. Although the repeated code  does not have   good parameters, it is  because a rare few codewords are ``close" to the sent codeword.  However,  these have little affect on the decoding performance.

\section{Bounds for the correction capability using multiciplity assignment (2)}\label{sec:boundsfortau2}

 We consider the decoding capability of this algorithm  for  the repetition code of a Reed-Solomon code with   $b=  \lfloor \ell /2 \rfloor +1$ and $b=  \lfloor \ell /2 \rfloor$. The decoding capability of the soft-decoding algorithm for Reed-Solomon codes can be found in \cite[Section IV]{Lee-Michael1}. However, the bounds are obtained in terms of the interpolation multiplicities and we cannot infer a bound in terms of the weight of the error vector. In order to obtain such a bound, we should assume that the error vector verifies an additional condition for performing our analysis.

\begin{theorem}\label{te:BoundsAssignment2}
The algorithm  introduced in Section~\ref{sec:interp-repe}  using the second multiplicity assignment for a Reed-Solomon repetion code $[n\ell,k,\ell (n-k+1)]$   can decode at least the following number of errors if the error vector $\mathbf{e}$ verifies the following assumption: for every $i \in \{1, \ldots, n\}$ and $ \beta \in \mathbb{F}_q\setminus  \{ 0\}$,
 $$\# \{ e_i^j = \beta : j = 1 , \ldots , \ell\} \le b-1.$$ 
\begin{itemize}
\item For $b=  \lfloor \ell /2 \rfloor +1$ and $\ell$ odd,
\[
(n-k)\left( \left\lfloor \frac{\ell}{2} \right\rfloor +1 \right)+ \left\lfloor \frac{\ell}{2} \right\rfloor~\mathrm{errors}.
\]
\item For $b=  \lfloor \ell /2 \rfloor +1$ and $\ell$ even,
\[
(n-k)\left\lfloor \frac{\ell}{2} \right\rfloor  +\left\lfloor \frac{\ell}{2} \right\rfloor-1~\mathrm{errors} .
\]
\item For $b=  \lfloor \ell /2 \rfloor $ and $\ell$  odd,
\[
(n-k)\left(\left\lfloor \frac{\ell}{2} \right\rfloor +2 \right)+\left(\left\lfloor \frac{\ell}{2} \right\rfloor +1 \right)~\mathrm{errors}.
\]
\item For $b=  \lfloor \ell /2 \rfloor $ and $\ell$ even,
\[
(n-k)\left(\left\lfloor \frac{\ell}{2} \right\rfloor +1 \right)  +\left\lfloor \frac{\ell}{2} \right\rfloor~\mathrm{errors}.
\]
\end{itemize}
\end{theorem}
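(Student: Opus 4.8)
My plan is to reduce all four statements to the single inequality $\score(\mathbf{c})\ge k$, which by the interpolation facts recalled in Section~\ref{sec:interpolating} already guarantees that $\mathbf{c}$ appears in the output list, and then to bound below the number of errors compatible with that inequality. Throughout I write $\mathbf{e}=(e_i^j)$ in matrix form, set $\tau_i=wt(e_i^1,\dots,e_i^\ell)$ so that $\tau=\sum_{i=1}^n\tau_i$, let $c=(c_1,\dots,c_n)$ be the constituent codeword and $h_c$ its polynomial, of degree at most $k-1$, and let $Q$ be the minimal interpolating polynomial returned by the algorithm.

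The first and main step is to use the hypothesis on $\mathbf{e}$ to control the interpolation ideal $I_M$. For $\beta\neq c_i$ the equality $m_{i,\beta}=1$ would force $|\{j: e_i^j=\beta-c_i\}|\ge b$ with $\beta-c_i\neq 0$, contradicting the assumption; hence $m_{i,\beta}=0$ whenever $\beta\neq c_i$. Thus the only points of $M$ carrying positive multiplicity are some of the $(\alpha_i,c_i)$, each with multiplicity one, and all of them lie on the curve $y=h_c(x)$, so $y-h_c\in I_M$. Since $>_{k-1}$ refines the $(1,k-1)$-weighted degree and $Q$ is a $>_{k-1}$-minimal nonzero element of $I_M$, this yields
\[
\deg_{k-1}(Q)\ \le\ \deg_{k-1}(y-h_c)\ =\ k-1 .
\]
Consequently, as soon as $\score(\mathbf{c})=\sum_{i=1}^n m_{i,c_i}\ge k$ we have $\score(\mathbf{c})>\deg_{k-1}(Q)$, so $y-h_c$ divides $Q$, the root-finding step returns $h_c$, and $\mathbf{c}=(c,\dots,c)$ is decoded.

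The second step is an elementary count. By construction $m_{i,c_i}=1$ exactly when $|\{j: e_i^j=0\}|\ge b$, that is when $\tau_i\le\ell-b$; call such an index $i$ good and the others bad, so $\score(\mathbf{c})=n-\#\{\text{bad indices}\}$. Each bad index contributes $\tau_i\ge\ell-b+1$ to $\tau$, hence $\#\{\text{bad indices}\}\le\tau/(\ell-b+1)$. Therefore $\tau\le(n-k+1)(\ell-b+1)-1$ forces $\#\{\text{bad indices}\}\le n-k$, i.e.\ $\score(\mathbf{c})\ge k$, and decoding succeeds. This shows the algorithm corrects at least $(n-k+1)(\ell-b+1)-1$ errors; the four displayed expressions then come out by substituting $b=\lfloor\ell/2\rfloor+1$ and $b=\lfloor\ell/2\rfloor$ and separating according to the parity of $\ell$ via $\ell-\lfloor\ell/2\rfloor=\lceil\ell/2\rceil$.

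I expect the delicate point to be the implication $y-h_c\in I_M\Rightarrow\deg_{k-1}(Q)\le k-1$, which hinges on $Q$ being minimal for a monomial order refining the $(1,k-1)$-weighted degree exactly as in the description of the algorithm; everything else is bookkeeping. It is also worth emphasising the role of the hypothesis: without it a point $(\alpha_i,\beta)$ with $\beta\neq c_i$ could enter $M$, so $y-h_c$ would fall out of $I_M$ and the bound on $\deg_{k-1}(Q)$ would break. Finally one should add the remark, announced before the theorem, that this hypothesis holds with high probability over a large field, since the nonzero error values in a fixed column behave like independent uniform elements of $\mathbb{F}_q$ and so rarely repeat $b$ times.
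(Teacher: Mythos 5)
Your proof is correct and follows essentially the same route as the paper: the assumption forces every point with positive multiplicity to be of the form $(\alpha_i,c_i)$, so corrupted positions act as erasures, and the worst-case count $(n-k+1)(\ell-b+1)-1$ reproduces all four stated bounds. The only (minor) difference is that where the paper simply invokes erasure decoding of an MDS code from $k$ uncorrupted symbols, you justify that step inside the interpolation framework via $y-h_c\in I_M$, the minimality of $Q$ and the score-versus-weighted-degree criterion, which is a slightly more self-contained version of the same argument.
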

\begin{proof}
 Notice that the constituent code  is a MDS code. Under the assumption that   $\# \{ e_i^j = \beta : j = 1 , \ldots , \ell\} \le b-1$ for every $i \in \{1, \ldots, n\}$ and $ \beta \in \mathbb{F}_q\setminus  \{ 0\}$ we guarantee that we only have erasures (we have no errors) and we assign to them  multiplicity zero. 
An erasure MDS code can correctly decode a received word if $k$ symbols are non-corrupted, in our setting, this means that we have assigned multiplicity one.  The point $(\alpha^i,r_i^j)$ is assigned multiplicity one if and only if $b$  blocks have the same value in the $i$-th position. Therefore  $(n-k)$ positions can be corrupted in $\lfloor \ell/2 \rfloor +1$ blocks and we still  can decode it correctly if $b=  \lfloor \ell /2 \rfloor +1$ and $\ell$ is odd, $\lfloor \ell/2 \rfloor $ if $b=  \lfloor \ell /2 \rfloor +1$ and $\ell$ is even, 
$(\lfloor \ell/2 \rfloor +2 )$ if $b=  \lfloor \ell /2 \rfloor $ and $\ell$ is odd, $(\lfloor  \ell / 2  \rfloor +1 )$ if $b=  \lfloor \ell /2 \rfloor $ and $\ell$ is even. 

In the other $k$ positions there could be at most $\lfloor \ell/2 \rfloor$ errors to guarantee success, i.e. to obtain an erasure in such position, if $b=  \lfloor \ell /2 \rfloor +1$ and  $\ell$ is odd, $\lfloor \ell/2 \rfloor -1 $ if $b= \lfloor \ell /2 \rfloor +1$ and $\ell$ is even,  $\lfloor \ell/2 \rfloor+1$ if $b= \lfloor \ell /2 \rfloor$ and $\ell$ is odd, $\lfloor \ell/2 \rfloor$ if $b= \lfloor \ell /2 \rfloor$ and $\ell$ is even. 

Summing up all these values we get the above bound.
\end{proof}

\begin{remark}\label{re:examples}

Let us compute these bounds for the examples in previous section.
\begin{itemize}

\item For the $[63,14,50]$ RS code and $\ell=5$ and $b=3$ Theorem \ref{te:BoundsAssignment2} tell us that we can decode  $149$ errors  while if $b=2$ we can decode $199$ errors. If we check the computer experiments in Table \ref{RS-63-14-m1} we can successfully decode $183$ errors for $b=3$ and $218$ errors for $b=2$ respectively.
\item For the $[63,40,50]$ RS code and $\ell=5$ and $b=3$ Theorem \ref{te:BoundsAssignment2} tell us that we can decode  $71$ errors  while if $b=2$ we can decode $95$ errors. If we check the computer experiments in Table \ref{RS-63-40-m1} we can successfully decode $113$ errors for $b=3$ and $150$ errors for $b=2$ respectively.

\item For the $[63,54,50]$ RS code and $\ell=5$ and $b=3$ Theorem \ref{te:BoundsAssignment2} tell us that we can decode  $29$ errors  while if $b=2$ we can decode $39$ errors. If we check the computer experiments in Table \ref{RS-63-54-m1} we can successfully decode $61$ errors for $b=3$ and $89$ errors for $b=2$ respectively.
\item 
For the $[26,14,13]$ RS code and $\ell=5$ and $b=3$ Theorem \ref{te:BoundsAssignment2} tell us that we can decode  $38$ errors  while if $b=2$ we can decode $51$ errors. If we check the computer experiments in Table \ref{RS-26-14-m1} we can successfully decode $53$ errors for $b=2$ and $46$ errors for $b=3$ respectively.
\end{itemize}
Therefore, we claim that we can use our bounds as a conservative estimate of the real decoding capacity, especially for low rate codes, independently of the assumptions.
\end{remark}

Note that our assumption  in Theorem 1  for the error vector $\mathbf{e}$ will hold with a high probability if the field is not too small:  the higher the base field is, the more unlikely is that two error positions $e_i^j$ and $e_{i}^{j'}$ are equal, for $i=1,\ldots , n$. We compute now when this algorithm can decode more errors than a $[n\ell,k,n\ell-k+1]$ Reed-Solomon code with a unique decoding algorithm. Notice that we perform list decoding of a Reed-Solomon code over $\mathbb{F}_q$ and we compare it with a unique decoding of  a $[n\ell,k,n\ell-k+1]$ MDS  code over $\mathbb{F}_{q}$ (if it exists) or over a higher field.

\begin{itemize}
\item If $b=  \lfloor \ell /2 \rfloor +1$ and $\ell$ is odd, i.e., $\ell=2\ell'+1$ where $\ell' = \lfloor \ell /2  \rfloor$.

\[
 (n-k)\left(\left\lfloor \frac{\ell}{2} \right\rfloor +1 \right)+\left\lfloor \frac{\ell}{2} \right\rfloor \geq \left\lfloor \frac{n\ell-k+1}{2} \right \rfloor
\geq \frac{n\ell-k+1}{2}
\]
We compare the  leftmost and rightmost sides of the formula. We have that
\[
2(n-k)(\ell'+1)+2\ell'\geq  2n\ell'+n-k+1
\]

\[
2n \ell' + 2n - 2k(\ell' +1) + 2\ell' \ge 2n \ell' + n-k+1 
\]

\[
n + 2\ell'-1 \ge k (2\ell' +1) 
\]
\[
k \le \frac{n + 2\ell'-1}{\ell}
\]

For example if $n=63$, $\ell=5$, $\ell'=2$ then for $k\leq 13$ 
we can correctly decode at least up to the half of the minimum distance  of the corresponding Reed-Solomon code. According to our computations for $k=14$ we can indeed decode about $187$ errors but a Reed-Solomon code can correct 150 errors.
 \end{itemize}
 
Analogously, we have that: 

\begin{itemize}
\item If $b=  \lfloor \ell /2 \rfloor +1$ and $\ell$ is even, 
 i.e., $\ell=2\ell'$ where $\ell' = \lfloor  \ell/2  \rfloor$, then 

\[
k \le \frac{ 2\ell'-3}{2\ell' -1}<1.
\]
Therefore, we can conclude that one should not consider $\ell$ even for $b \ge  \lfloor \ell /2 \rfloor +1$.

\item If $b=  \lfloor \ell /2 \rfloor $ and $\ell$ is odd, i.e., 
 $\ell=2\ell'+1$ where $\ell' = \lfloor \ell/2  \rfloor$, then 

\[
k \le \frac{3n + 2\ell'+1}{2\ell' +3}
\]

For example if $n=63$, $\ell=5$, $\ell'=2$ then for $k\leq 27$ 
we can correctly decode at least up to the half of the minimum distance  of the corresponding Reed-Solomon code. According to our computations for $k=14$ we can indeed decode about $219$ errors but a Reed-Solomon code can correct 150 errors.

\item If $b=  \lfloor \ell /2 \rfloor $ and $\ell$ is even, i.e., 
 $\ell=2\ell'$ where $\ell' = \lfloor \ell/2  \rfloor$, then

\[
k \le \frac{ 2n+ 2\ell' -1}{2\ell'+1}.
\]
For example if $n=63$, $\ell=4$, $\ell'=2$ then for $k\leq 25$ 
we can correctly decode at least up to the half of the minimum distance  of the corresponding Reed-Solomon code. 

\end{itemize}

\section{Bounds for the correction capability using multiciplity assignment (1)}
\label{sec:boundsfortau}

As mentioned in Section \ref{sec:interpolating},  the interpolation
problem consists in finding a bivariate polynomial, $Q(x,y)$, passing
through the points $p_{i,\beta}$ with multiplicity $m_{i,\beta}$,
where the multiplicities are described in section
\ref{sec:interp-repe}. Therefore, we compute a Gr\"obner basis of the
ideal $I_M$, defined in (\ref{def: I_M}), with respect to the
$(1,k-1)$-weighted degree and consider the smallest element in this
basis. Succesful decoding is ensured when for some integer $d$ the following two conditions are satisfied:  

\begin{itemize}
\item[(i)]  The number of monomials of $(1,k-1)$ degree at most  $d$
 is larger than  the number of conditions imposed.

\item[(ii)]  The score of the sent codeword $c$ is larger than $d$.
\end{itemize}
The first item ensures the existence of a polynomial $f \in I_M$ of
weighted degree at most $d$ and the second ensures that $y-h_c$ is a
factor of $f$.

In the rest of this section we analyze when both conditions are
simultaneously satisfied in the context of quasi-cyclic codes with
multiplicity assignment (2).   Let $\mathbf{r}$ be the received word,
let $ \ttt$ be the number of errors, and for each $i=1,\dots,n$ let
$\ttau_i$ be the number of errors in position $i$.  Thus, $\sum_{i=1}^n 
\ttau_i=\ttt$.

Let us start with condition (ii).    By Lemma~\ref{le:scorebound}  the
score of $\mathbf{r}$ is $\ell n -\ttt$.  Let $a,b$  be the unique
integers satisfying $0\leq b < k-1$ and 
\begin{align*}
\ell n-\ttt -1 = a(k-1)+b
\end{align*}

Now consider condition (i). 
For $a\geq 0$ and $0\leq b \leq k-2$ let $P_{a,b}$ be the set of
monomials with $(1,k-1)$-weighted degree lower than or equal to
$a(k-1)+b$.  
The number of monomials in $P_{a,b}$ is
\begin{equation}\label{eq:NumberOfMonomials}
(k-1)+2(k-1)+\cdots+a(k-1)+(a+1)(b+1)=\frac{a(a+1)}{2}(k-1)+(a+1)(b+1).
\end{equation}
Thus, comparing   (\ref{eq:NumberOfMonomials}) with (\ref{eq:nequat})  we get the following condition:

\begin{equation} \label{eq:exist}
(k-1)\frac{a(a+1)}{2}+(b+1)(a+1)\geq \sum_{i=1}^{n}\sum_{\beta\in\mathbb{F}_q}\binom{m_{i,\beta}+1}{2}+1
\end{equation}
where $M$ is the multiplicity matrix derived from $\mathbf{r}$
according to multiplicity assignment (2).
In order to analyze the multiplicity matrix we will make a simplifying assumption:
If two errors occur at position $i$ and blocks  $j,k$ then
$r_{i}^j\neq r_{i}^k$.  That is, 
\begin{equation}\label{as:e1}e_{i}^j \neq e_{i}^k.\end{equation}
 If the base field is large and  $\ell$ is small, this assumption is realistic.

Under the above assumption, for position $i$ there are $\ell-\tau_i$
blocks that have the correct value and $\ttau_i$  that are incorrect,
but unequal to one another.  The  number of conditions imposed by the correct blocks is 
is $\binom{\ell-\ttau_i}{2}$, while the incorrect blocks impose one
condition each, for a total of $\ttau_i $.    Thus the total number of
conditions imposed is 
\begin{align*}
\sum_{i=1}^{n} \Big( \binom{\ell-\ttau_i}{2} + \ttau_i \Big) &= 
\sum_{i=1}^{n}\Big( \dfrac{(\ell-\ttau_i)(\ell -\ttau_i+1)}{2} + \ttau_i \Big)\\
&= \sum_{i=1}^{n} \Big(\dfrac{(\ell)(\ell+1)}{2}  - \ttau_i(\ell-1) + \dfrac{\ttau_i(\ttau_i-1)}{2}\Big) \\
&= n\dfrac{(\ell)(\ell+1)}{2}  - \ttt(\ell-1) + \sum_{i=1}^{n}\dfrac{\ttau_i(\ttau_i-1)}{2} 
\end{align*}

Notice that the final sum $A=
\sum_{i=1}^{n}\dfrac{\ttau_i(\ttau_i-1)}{2} $ is the only one that depends on the
distribution of the errors.  We now consider  three cases.  
The  term $A$ is minimized when the
$\tau_i$ are distributed as evenly possible.  Let $\ttt = nq_1 + s_1$
with $0 \leq s_1<n$ and
assume that $s_1$ positions have $q_1+1$ errors while $n-s_1$ positions have 
$q_1$ errors.  The final term is then
\[
A_{\min} = s_1\dfrac{(q_1+1)q_1}{2} + (n-s_1) \dfrac{(q_1-1)q_1}{2} = n \dfrac{q_1(q_1-1)}{2} +s_1q_1 
\]
The final term $A$ is maximized when the  errors are
consolidated into as few  positions as possible.
Let $\ttt = \ell q_2 + s_2$ with $0\leq s_2 < \ell$ and assume that $q_2$ positions have
$\ell$ errors and that one position has $s_2$ errors.  The final term
is then 
\[
A_{\max} = q_2 \dfrac{\ell (\ell -1)}{2} + \dfrac{s_2(s_2-1)}{2}
\]

Finally we consider the expected value of the final term $A$, subject to
the $\ttt$ error positions being randomly chosen from 
$\{(i,j): i \in \{1,\dots,n\}, j \in \{1,\dots, \ell\} \}$.
The probability of any particular $\ttau_1, \dots, \ttau_n$ occuring
is 
\[ \dfrac{ \prod_{i=1}^n \binom{\ell}{\ttau_i}}{ \binom{n\ell}{\ttt}}.\]
Let $\mathbf{x} $ stand for indeterminates $x_1, \dots, x_n$
and let 
$\mathbf{1}$ be an $n$-tuple with 1 in each entry.  
Let $|\ttau|= \sum_{i=1}^n \ttau_i$.
Consider the generating function 
\begin{align}
\label{e:B}
B(\mathbf{x},s) = \prod_{i=1}^n(1+sx_i)^\ell
\end{align}
The term in  $s^\ttt$ in $B(\mathbf{x},s)$  is 
\[\sum_{|\ttau| = t} \prod_{i=1}^n \binom{\ell}{\ttau_i}
(sx_i)^{\ttau_i}
= s^\ttt\sum_{|\ttau| = t} \prod_{i=1}^n \binom{\ell}{\ttau_i} (x_i)^{\ttau_i}
\]
% %Thus dividing by $\binom{n\ell}{\ttt}$ we have the probability for gen
 Taking the second derivatives of $B(\mathbf{x},s)$ with respect to the
 $x_i$ one finds that  the term in  $s^\ttt $  in
 $\sum_{i=1}^n \dfrac{\partial^2 B}{\partial x_i^2} (\mathbf{x},s)$ is 
 \[
 s^\ttt  \sum_{|\ttau| = t} \Big( \prod_{i=1}^n \binom{\ell}{\ttau_i} \Big) \sum_{i=1}^n
 \tau_i(\tau_i-1)(x_i)^{\ttau_i-2}
 \]
 Evaluating at $\mathbf{x}= \mathbf{1}$ we get the term in $s^t$ in 
 $\sum_{i=1}^n \dfrac{\partial^2 B}{\partial x_i^2} (\mathbf{1},s)$ is 
 \[
 s^\ttt  \sum_{|\ttau| = t} \Big( \prod_{i=1}^n \binom{\ell}{\ttau_i} \Big) \sum_{i=1}^n
 \tau_i(\tau_i-1)
 \]
 Thus the expected value of the final term $A$ is the coefficient of
$s^\ttt$ in 
$(2\binom{n\ell}{\ttt})^{-1}\sum_{i=1}^n \dfrac{\partial^2   B}{\partial x_i ^2} (\mathbf{1},s)$

Computing from \eqref{e:B} we have
\begin{align*}
\sum_{i=1}^n \dfrac{\partial^2 B}{\partial x_i^2} (\mathbf{x},s) 
&=  s^2\sum_{i=1}^n \ell(\ell-1)(1+sx_i)^{\ell-2} 
\prod_{\substack{  j=1 \\ j \ne i}} ^n (1+sx_j)^\ell \\
\sum_{i=1}^n \dfrac{\partial^2 B}{\partial x_i^2}(\mathbf{1},s) 
&=  s^2\sum_{i=1}^n \ell(\ell-1)(1+s)^{n\ell-2} \\
&=  s^2n \ell(\ell-1)(1+s)^{n\ell-2} 
\end{align*}
The coefficient of $s^\ttt$ in  
$(2\binom{n\ell}{\ttt})^{-1}\sum_{i=1}^n \dfrac{\partial^2
  B}{\partial x_i^2} (\mathbf{1},s)$ is therefore
\begin{align*}
 A_{\exp} =  n\ell(\ell-1) \dfrac{(n\ell-2)!}{(\ttt-2)!(n\ell-\ttt)!}  
\dfrac{ (n\ell-\ttt)!\ttt!} {2     (n\ell)!}
&=
\dfrac{\ttt(\ttt-1)(\ell-1)}{2(n\ell-1)}
\end{align*}

In the following table for each code considered in Section~\ref{sec:sim} we show the
maximum value $t$ for which the algorithm is  able to decode in
three cases: for the worst scenario, for the expected
scenario, and for the best scenario (i.e. using $A_{\max}$,
$A_{\exp}$, $A_{\min}$).
We also list 
the range where experiments showed that decoding capability began to decline.
It is interesting to see that the experimental results are somewhat
better, even better than the best scenario (using $A_{\min}$), particularly at higher
rate.   
In the simulations, the number of monomials less than the leading term of the minimal  $Q(x,y)$ that was computed is noticeably smaller than the number of conditions imposed.  In other words, the conditions imposed by interpolation are not independent.
We have no explanation for this phenomenon, but it appears to be  the key to the performance beyond our estimates.

\begin{table}[htb]\caption{Bounds for the correction capability using multiplicity assignment (1)}
\begin{center}
\begin{tabular}{|c|c|c|c|c|}\hline 
code  & max & exp &min  & range\\ \hline 
$[63,14]_{\mathbb{F}_{2^6}}$ & 203&223&227& 226-230 \\\hline
$[63,40]_{\mathbb{F}_{2^6}}$ &89 &130 & 142& 153-158\\\hline
$[63,54]_{\mathbb{F}_{2^6}}$ &38 &69 &81 &94-99 \\ \hline
$[26,14]_{\mathbb{F}_{3^3}}$ &49 &64 &69 &65-66 \\ \hline
\end{tabular}
\end{center}
\end{table}

\section{Conclusion}\label{sec:con}

An efficient list-decoding algorithm for repeated codes, in particular
for repeated Reed-Solomon codes is presented. The theoretical and experimental
results show that decoding repeated codes with this algorithm yields
surprisingly good error correction performance, nearly comparable to
that of a Reed-Solomon codes over a larger field.
Furthermore, the  computational burden of the repeated code is much
lower because of the smaller field size.


\begin{thebibliography}{99}

\bibitem{Alekhnovich}
M.~Alekhnovich, Linear Diophantine equations over polynomials and soft decoding of Reed-Solomon codes.  
\emph{IEEE Trans. Inform.Theory}, vol.~51, no.~7, pp.~2257-2265, 2005.
\bibitem{ma}
W.~Bosma, J.~Cannon, and C.~Playoust, ``The magma algebra system. {I}. the user
  language,'' \emph{J. Symbolic Comput.}, vol. 24(3-4), pp. 235--265, 1997.

\bibitem{Dumer}
I.~I. Dumer, ``Concatenated codes and their multilevel generalizations,'' in
  \emph{Handbook of coding theory, {V}ol. {I}, {II}}.\hskip 1em plus 0.5em
  minus 0.4em\relax Amsterdam: North-Holland, 1998, pp. 1911--1988.

\bibitem{eli}
P.~Elias, \emph{List decoding for noisy channels}.\hskip 1em plus 0.5em minus
  0.4em\relax Research Laboratory of Electronics, Massachusetts Institute of
  Technology, Cambridge, Mass., Rep. No. 335, 1957.

\bibitem{Guruswami-Sudan}
V.~Guruswami and M.~Sudan, ``Improved decoding of {R}eed-{S}olomon and
  algebraic-geometry codes,'' \emph{IEEE Trans. Inform. Theory}, vol.~45,
  no.~6, pp. 1757--1767, 1999.

\bibitem{koetter}
R.~Koetter and A.~Vardy, ``Algebraic soft-decision decoding of {R}eed-{S}olomon
  codes,'' \emph{IEEE Trans. Inform. Theory}, vol.~49, no.~11, pp. 2809--2825,
  2003.

\bibitem{Lally}
K.~Lally, ``Quasicyclic codes - some practical issues,'' in \emph{Proceedings.
  2002 IEEE International Symposium on Information Theory}, 2002.

\bibitem{Lee-Michael1}
K.~Lee and M.~E. O'Sullivan, ``An interpolation algorithm using {G}roebner bases
  for soft-decision decoding of {R}eed-{S}olomon codes,'' in \emph{Proceedings.
  2006 IEEE International Symposium on Information Theory}, 7 2006, pp.
  2032--2036.

\bibitem{Lee-Michael}
------, ``List decoding of {R}eed-{S}olomon codes from a {G}r\"obner basis
  perspective,'' \emph{J. Symbolic Comput.}, vol.~43, no.~9, pp. 645--658,
  2008.

\bibitem{mac}
F.~J. MacWilliams and N.~J.~A. Sloane, \emph{The theory of error-correcting
  codes. }.\hskip 1em plus 0.5em minus 0.4em\relax Amsterdam: North-Holland
  Publishing Co., 1977, North-Holland Mathematical Library, Vol. 16.

\bibitem{Sudan}
M.~Sudan, ``Decoding of {R}eed-{S}olomon codes beyond the error-correction bound,''
  1997, vol.~13, pp. 180--193.

\bibitem{woz}
J.~M. Wozencraft, ``List decoding,'' in \emph{Quarterly Progress Report}.\hskip
  1em plus 0.5em minus 0.4em\relax MA:Res. Lab. Electronics, MIT: Cambridge,
  1958, pp. 90--95.

\end{thebibliography}
\end{document}